\documentclass[12pt]{article}

\usepackage[margin=1in]{geometry}  
\usepackage{graphicx}              
\usepackage{amsmath}               
\usepackage{amsfonts}              
\usepackage{amsthm}                
\usepackage{amssymb}
\usepackage{mathrsfs}
\usepackage{enumerate}
\usepackage{algorithmic}
\usepackage{url}
\usepackage{seqsplit}

\theoremstyle{plain}
\newtheorem{theorem}{Theorem}

\newtheorem{proposition}[theorem]{Proposition}

\theoremstyle{definition}


\algsetup{linenodelimiter=.}

\newcommand{\refsection}[1]{Section \ref{#1}}
\newcommand{\refequation}[1]{Equation (\ref{#1})}

\newcommand{\refalgorithm}[1]{Algorithm \ref{#1}}


\newcommand{\Romnum}[1]{\uppercase\expandafter{\romannumeral #1}}











\def\F{\mathbb{F}}
\def\K{\mathbb{K}}
\def\M{\mathsf{M}}
\def\CC{\mathsf{C}}

\makeatletter
\newcounter{algorithm}
\setcounter{algorithm}{0}
\renewcommand{\thealgorithm}{\arabic{algorithm}}
\def\algorithm{\@ifnextchar[{\@algorithma}{\@algorithmb}}
\def\@algorithma[#1]{%
  \refstepcounter{algorithm}
  \trivlist
  \leftmargin\z@
  \itemindent\z@
  \labelsep\z@
  \item[\parbox{\columnwidth}{%
    \hrule
    \hrule
    \noindent\strut\textbf{Algorithm \thealgorithm} #1
    \hrule
  }]\hfil\vskip0em%
}
\def\@algorithmb{\@algorithma[]}

\makeatother


\newcommand{\keywords}[1]{\begin{description} \item[\textbf{Keywords.}] #1 \end{description}}


\title{Taking Roots over High Extensions \\ of Finite Fields}

\author{Javad Doliskani \\ \texttt{jdoliska@uwo.ca} 
\and
\'{E}ric Schost \\ \texttt{eschost@uwo.ca}}

\date{}

\begin{document}

\maketitle

\begin{abstract}
We present a new algorithm for computing $m$-th roots over the finite
field $\F_q$, where $q = p^n$, with $p$ a prime, and $m$ any positive
integer. In the particular case $m=2$, the cost of the new algorithm
is an expected $O(\M(n)\log (p) + \CC(n)\log(n))$ operations in
$\F_p$, where $\M(n)$ and $\CC(n)$ are bounds for the cost of
polynomial multiplication and modular polynomial composition. Known
results give $\M(n) = O(n\log (n) \log\log (n))$ and $\CC(n) =
O(n^{1.67})$, so our algorithm is subquadratic in $n$.  
\keywords{Root extraction; square roots; finite field arithmetic.}
\textbf{Mathematics Subject Classification 2010.} Primary 11Y16, 12Y05, Secondary 68W30.
\end{abstract}


\section{Introduction}\label{section:intro}

Beside its intrinsic interest, computing $m$-th roots over finite
fields (for $m$ an integer at least equal to $2$) has found many
applications in computer science. Our own interest comes from elliptic
and hyperelliptic curve cryptography; there, square root computations
show up in pairing-based cryptography~\cite{BaKiLySc02} or
point-counting problems~\cite{GaSc10}.

Our result in this paper is a new algorithm for computing $m$-th roots
in a degree $n$ extension $\F_q$ of the prime field $\F_p$, with $p$ a
prime. Our emphasis is on the case where $p$ is thought to be small,
and the degree $n$ grows. Roughly speaking, we reduce the problem to
$m$-th root extraction in a lower degree extension of $\F_p$ (when
$m=2$, we actually reduce the problem to square root extraction over
$\F_p$ itself).


\paragraph{Our complexity model.}
It is possible to describe the algorithm in an abstract manner,
independently of the choice of a basis of $\F_q$ over $\F_p$. However,
to give concrete complexity estimates, we have to decide which
representation we use, the most usual choices being monomial and
normal bases. We choose to use a monomial basis, since in particular
our implementation is based on the library NTL~\cite{NTL2009}, which
uses this representation.  Thus, the finite field $\F_q=\F_{p^n}$ is
represented as $\F_p[X]/\langle f\rangle$, for some monic irreducible
polynomial $f \in \F_p[X]$ of degree $n$; elements of $\F_q$ are
represented as polynomials in $\F_p[X]$ of degree less than $n$. We
will briefly mention the normal basis representation later on.

The costs of all algorithms are measured in number of operations
$+,\times,\div$ in the base field $\F_p$ (that is, we are using an
algebraic complexity model).

We shall denote upper bounds for the cost of {\em polynomial
  multiplication} and {\em modular composition} by respectively
$\M(n)$ and $\CC(n)$. This means that over any field $\K$, we can
multiply polynomials of degree $n$ in $\K[X]$ in $\M(n)$ base field
operations, and that we can compute $f(g) \bmod h$ in $\CC(n)$
operations in $\K$, when $f,g,h$ are degree $n$ polynomials. We
additionally require that both $\M$ and $\CC$ are super-linear
functions, as in~\cite[Chapter~8]{GaGe03}, and that $\M(n)=O(\CC(n))$.
In particular, since we work in the monomial basis, multiplications
and inversions in $\F_q$ can be done in respectively $O(\M(n))$ and
$O(\M(n)\log(n))$ operations in $\F_p$, see again~\cite{GaGe03}.

The best known bound for $\M(n)$ is $O(n\log(n)\log\log(n))$, achieved
by using Fast Fourier Transform~\cite{Schonhage1971,CaKa91}.  The most
well-known bound for $\CC(n)$ is $O(n^{(\omega+1)/2})$, due to Brent
and Kung~\cite{BrKu78}, where $\omega$ is such that matrices of size
$n$ over any field $\K$ can be multiplied in $O(n^\omega)$ operations
in $\K$; this estimate assumes that $\omega > 2$, otherwise some
logarithmic terms may appear. Using the algorithm of Coppersmith and
Winograd~\cite{CoWi90}, we can take $\omega \le 2.37$ and thus
$\CC(n)=O(n^{1.69})$; an algorithm by Huang and Pan~\cite{HuPa98}
actually achieves a slightly better exponent of $1.67$, by means of
rectangular matrix multiplication.

\paragraph{Main result.}
We will focus in this paper on the case of $t$-th root extraction,
where $t$ is a prime divisor of $q-1$; the general case of $m$-th root
extraction, with $m$ arbitrary, can easily be reduced to this case
(see the discussion after Theorem~\ref{theo:main}).

The core of our algorithm is a reduction of $t$-th root extraction in
$\F_q$ to $t$-th root extraction in an extension of $\F_p$ of smaller
degree. Our algorithm is probabilistic of Las Vegas type, so 
its running time is given as an expected number of operations. With
this convention, our main result is the following.

\begin{theorem}\label{theo:main}
  Let $t$ be a prime factor of $q-1$, with $q=p^n$, and let $s$ be the
  order of $p$ in $\mathbb{Z}/t\mathbb{Z}$. Given $a\in \F_q^*$, one
  can decide if $a$ is a $t$-th power in $\F_q^*$, and if so compute
  one of its $t$-th roots, by means of the following operations:
  \begin{itemize}
  \item an expected $O(s\M(n)\log(p) + \CC(n)\log(n))$ operations in
    $\F_p$;
  \item a $t$-th root extraction in $\F_{p^s}$.
  \end{itemize}
\end{theorem}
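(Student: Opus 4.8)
The plan is to reduce $t$-th root extraction in $\F_q=\F_{p^n}$ to a single $t$-th root extraction in its subfield $\F_{p^s}$, exploiting that $s$ (the order of $p$ modulo $t$) is the least degree for which $\mu_t\subseteq\F_{p^s}$, so that $\F_{p^s}\subseteq\F_q$ (indeed $s\mid n$, since $t\mid q-1$). Write $v_t(\cdot)$ for the $t$-adic valuation and put $v=v_t(q-1)$, $w=v_t(p^s-1)\ge 1$; note $v\ge w$. The bridge between the two fields is the relative norm $N:=N_{\F_q/\F_{p^s}}$, whose exponent is $M:=(q-1)/(p^s-1)=1+p^s+\cdots+p^{s(n/s-1)}$ and which I would evaluate by the usual recursive ``norm tree'' built from iterated Frobenius: once the $O(\log n)$ Frobenius powers $X^{p^{2^j}}\bmod f$ have been precomputed by repeated modular composition, in $O(\CC(n)\log n)$ operations in $\F_p$, each such norm costs a further $O(\CC(n)\log n)$ operations.

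I would settle the \emph{decision} part first. As $t\mid p^s-1$, the exponent $(q-1)/t$ factors as $\big((p^s-1)/t\big)\,M$, hence $a^{(q-1)/t}=N(a)^{(p^s-1)/t}$, and so $a$ is a $t$-th power in $\F_q^*$ exactly when $N(a)$ is a $t$-th power in $\F_{p^s}^*$; equivalently, $N$ identifies $\F_q^*/(\F_q^*)^t$ with $\F_{p^s}^*/(\F_{p^s}^*)^t$, both cyclic of order $t$. This test amounts to one norm evaluation ($O(\CC(n)\log n)$) and one $(p^s-1)/t$-th power, i.e.\ an exponentiation with an $O(s\log p)$-bit exponent ($O(s\,\M(n)\log p)$). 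Running the same test on a uniformly random element of $\F_q^*$ exhibits, with probability $1-1/t$, a non-$t$-th power; this is the only place randomness enters, which is why the algorithm is Las Vegas with $O(1)$ expected trials.

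For the \emph{root}, assume $a$ is a $t$-th power, set $b:=N(a)\in\F_{p^s}^*$, obtain a $t$-th root $d$ of $b$ in $\F_{p^s}$ by the subroutine permitted in the statement (this one call absorbs both the $t$-part and the prime-to-$t$ part of $\F_{p^s}^*$), and then \emph{lift} $d$ to a $t$-th root $x$ of $a$ in $\F_q$. The guiding observation is that every $t$-th root $x$ of $a$ satisfies $N(x)^t=N(a)=d^t$, so $N(x)$ lies in the coset $d\mu_t\subseteq\F_{p^s}^*$: thus $x$ is determined, up to the irrelevant factor $\mu_t$, by a single element of $\F_{p^s}$, and it is enough to produce one preimage. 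Accordingly I would first choose $y\in\F_q^*$ with $N(y)\in d\mu_t$, which pushes the remaining problem into the norm-one subgroup $U=\ker N$, cyclic of order $M$ with $v_t(M)=v-w$, and then take a $t$-th root there. When $v=w$ (equivalently $t\nmid M$, which always holds for $t=2$ with $n$ odd) this last step is immediate. When $v>w$, the residual $t^{\,v-w}$-torsion of $U$ must be peeled off by a Tonelli--Shanks-type digit recursion that climbs, one level at a time, the $t$-tower joining $\F_{p^s}$ to the subfield in which the full $t$-Sylow $\mu_{t^v}$ of $\F_q^*$ already lives; each level yields one digit in $\{0,\dots,t-1\}$, recovered from a discrete logarithm in $\mu_t\subseteq\F_{p^s}^*$.

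The step I expect to be the genuine obstacle is the lift: the naive recipe ``project by $N$, take the root below, correct by one exponentiation'' collapses, because inverting $N$ --- on the prime-to-$t$ part of $a$, or inside $U$ --- calls for an exponent of bit-size $\Theta(n\log p)$, which already overshoots the target by a factor of about $n$. The heart of the algorithm is to show that the lift can nonetheless be performed using only (i) iterated Frobenius --- that is, exponents that are $\mathbb Z$-linear combinations of Frobenius powers and hence realizable by $O(\log n)$ modular compositions --- together with (ii) a bounded number of exponentiations whose exponents carry only $O(s\log p)$ bits, and to organize the descent so that the ambient fields have degrees summing, in the super-linear sense of $\M$ and $\CC$, to $O(n)$. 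Granting this bookkeeping, the Frobenius/norm work totals $O(\CC(n)\log n)$ and the residual arithmetic totals $O(s\,\M(n)\log p)$; adjoining the single $t$-th root extraction in $\F_{p^s}$ and observing that every randomized subroutine succeeds with constant probability yields the stated expected bound.
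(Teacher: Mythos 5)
Your decision step is fine (and essentially equivalent to the paper's test), but the constructive half of your argument has a genuine gap, and you say so yourself: the entire difficulty of the theorem is the ``lift,'' and your proposal does not actually perform it. Knowing $d$ with $d^t=N_{\F_q/\F_{p^s}}(a)$ pins down a $t$-th root $\gamma$ of $a$ only up to the kernel of the norm, and recovering $\gamma$ from there is exactly the expensive part: even in your ``immediate'' case $t\nmid M$, the unique $t$-th root inside $U=\ker N$ is $u^{t^{-1}\bmod M}$, an exponent of $\Theta(n\log p)$ bits with no Frobenius structure, so a direct exponentiation already costs $O(n\M(n)\log p)$ --- the quadratic bound you are trying to beat. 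The Tonelli--Shanks-style ``digit recursion up the $t$-tower'' you invoke for $v>w$ is not specified and would need its own complexity analysis; as written, ``granting this bookkeeping'' is granting the theorem.

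The paper avoids the norm entirely and uses the trace, which linearizes the problem. Writing $\gamma$ for the unknown root and $\beta=T_{\F_q/\F_{p^s}}(\gamma)=\sum_i\gamma^{p^{is}}$, one factors out $\gamma$ to get $\beta=\gamma b$ with
$b=1+a^{(p^s-1)/t}+\cdots+a^{(p^{(\ell-1)s}-1)/t}$,
which is computable from $a$ alone because $t\mid p^{is}-1$ makes each $\gamma^{p^{is}-1}=a^{(p^{is}-1)/t}$ explicit. Then $\beta^t=ab^t$ lies in $\F_{p^s}$ (after taking a minimal polynomial), a single $t$-th root extraction there yields $\beta$, and $\gamma=\beta b^{-1}$ is one division --- no inversion of a norm map is ever needed. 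The quantity $b$ is computed by a binary-powering/modular-composition recursion in $O(\CC(n)\log n+\M(n)\log p)$, and $a^{(p^s-1)/t}$ costs $O(s\M(n)\log p)$; the only failure mode, $\beta=0$, is handled by replacing $a$ with $ac^t$ for random $c$, which fails with probability $q'/q$. If you want to salvage your write-up, this trace identity is the missing idea you flagged as ``the heart of the algorithm.''
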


Thus, we replace $t$-th root extraction in a degree $n$ extension by a
$t$-th root extraction in an extension of degree $s \le \min(n,t)$.
The extension degree $s$ is the largest one for which $t$ still
divides $p^s-1$, so iterating the process does not bring any
improvement: the $t$-th root extraction in $\F_{p^s}$ must be dealt
with by another algorithm. The smaller $s$ is, the better.

A useful special case is $t=2$, that is, we are taking square roots;
the assumption that $t$ divides $q-1$ is then satisfied for all odd
primes $p$ and all $n$. In this case, we have $s=1$, so the second
step amounts to square root extraction in $\F_p$. Since this can be
done in $O(\log(p))$ expected operations in $\F_p$, the total running
time of the algorithm is an expected $O(\M(n)\log(p) + \CC(n)\log(n))$
operations in $\F_p$. 

A previous algorithm by Kaltofen and Shoup~\cite{KaltofenShoup1997}
allows one to compute $t$-th roots in $\F_{p^n}$ in expected time
$O((\M(t)\M(n)\log(p) + t \CC(n) + \CC(t)\M(n))\log(n))$; we discuss
it further in the next section. This algorithm requires no assumption
on $t$, so it can be used in our algorithm in the case $s > 1$, for
$t$-th root extraction in $\F_{p^s}$. Then, its expected running time
is $O((\M(t)\M(s)\log(p) + t \CC(s) + \CC(t)\M(s))\log(s))$.

The strategy of using Theorem~\ref{theo:main} to reduce from $\F_q$ to
$\F_{p^s}$ then using the Kaltofen-Shoup algorithm over $\F_{p^s}$
is never more expensive than using the Kaltofen-Shoup algorithm
directly over $\F_{q}$. For $t=O(1)$, both strategies are within a
constant factor; but even for the smallest case $t=2$, our algorithm
has advantages (as explained in the last section). For larger $t$, the
gap in our favor will increase for cases when $s$ is small (such 
as when $t$ divides $p-1$, corresponding to $s=1$).

\medskip

Finally, let us go back to the remark above, that for any $m$, one can
reduce $m$-th root extraction of $a \in \F_q^*$ to computing $t$-th
roots, with $t$ dividing $q-1$; this is well known, see for
instance~\cite[Chapter~7.3]{BachSh1996}. We write $m = uv$ with $(v, q
- 1) = 1$ and $t \mid q - 1$ for every prime divisor $t$ of $u$, and
we assume that $a$ is indeed an $m$-th power.
\begin{itemize}
\item We first compute the $v$-th root $a_0$ of $a$ as
  $a_0=a^{v^{-1} \bmod q - 1}$ by computing the inverse $\ell$ of
  $v \bmod q-1$, and computing an $\ell$-th power in $\F_q$. This
  takes $O(n\M(n)\log(p))$ operations in $\F_p$.
\item Let $u = \prod_{i = 1}^d{m_i^{\alpha_i}}$ be the prime
  factorization of $u$, which we assume is given to us. Then, for $k =
  1, \dots, \alpha_1$, we compute an $m_1$-th root $a_k$ of $a_{k-1}$
  using Theorem~\ref{theo:main}, so that $a_{\alpha_1}$ is an
  $m_1^{\alpha_1}$-th root of $a_0$.

  One should be careful in the choice of the $m_1$-th roots (which are
  not unique), so as to ensure that each $a_k$ is indeed an
  $u/m_1^i$-th power: if the given $a_k$ is not such a power, we can
  multiply it by a $m_1$-th root of unity until we find a suitable
  one.  The root of unity can be found by the algorithm of
  Theorem~\ref{theo:main}.

  Once we know $a_{\alpha_1}$, the same process can be applied to
  compute an $m_2^{\alpha_2}$-th root of $a_{\alpha_1}$, and so on.
\end{itemize}

The first step, taking a root of order $v$, may actually be the
bottleneck of this scheme. When $v$ is small compared to $n$, it may
be better to use here as well the algorithm by Kaltofen and Shoup
mentioned above.


\paragraph{Organization of the paper.} 
The next section reviews and discusses known algorithms;
Section~\ref{section:newRootEx} gives the details of the root
extraction algorithm and some experimental results. In all the paper,
$(\F_q^*)^t$ denotes the set of $t$-th powers in $\F_q^*$.

\paragraph{Acknowledgments.} We thank NSERC and the Canada Research
Chairs program for financial support.


\section{Previous work}

Let $t$ be a prime factor of $q-1$.  In the rest of this section, we discuss
previous algorithms for $t$-th root extraction, with a special focus
on the case $t=2$ (square roots), which has attracted most attention
in the literature.

We shall see in \refsection{section:newRootEx} that given such a prime
$t$, the cost of testing for $t$-th power is always dominated by the
$t$-th root extraction; thus, for an input $a \in \F_q^*$, we always
assume that $a \in (\F_q^*)^t$.

All algorithms discussed below rely on some form of exponentiation in
$\F_q$, or in an extension of $\F_q$, with exponents that grow
linearly with $q$. As a result, a direct implementation using binary
powering uses $O(\log(q))$ multiplications in $\F_q$, that is,
$O(n\M(n)\log(p))$ operations in $\F_p$. Even using fast
multiplication, this is quadratic in $n$; alternative techniques
should be used to perform the exponentiation, when possible.

\paragraph{Some special cases of square root computation.}
If $G$ is a group with an odd order $s$, then the mapping $f: G
\rightarrow G$, $f(a) = a^2$ is an automorphism of $G$; hence, every
element $a \in G$ has a unique square root, which is $a^{(s +
  1)/2}$. Thus, if $q \equiv 3 \pmod 4$, the square root of any $a \in
(\F_q^*)^2$ is $a^{(q + 1) / 4}$; this is because $(\F_q^*)^2$ is a
group of odd order $(q - 1) / 2$.

More complex schemes allow one to compute square roots for some
increasingly restricted classes of prime powers $q$. The following
table summarizes results known to us; in each case, the algorithm uses
$O(1)$ exponentiations and $O(1)$ additions / multiplications in
$\F_q$. The table indicates what exponents are used in the
exponentiations.

  \begin{table}[h]
    \caption{Some special cases for square roots}\label{table1}
\begin{center}
\begin{tabular}{c|c|c}
Algorithm & $q$ & exponent \\\hline
folklore  & $3 \pmod 4$ & $(q + 1) / 4$ \\
Atkin     & $5 \pmod 8$ & $(q-5)/8$ \\
M\"uller~\cite{Muller04} & $9 \pmod {16}$ & $(q-1)/4$ and $(q-9)/16$\\
Kong {\it et al.}~\cite{KoCaYuLi06} & $9 \pmod {16}$ & $(q-9)/8$ and $(q-9)/16$
\end{tabular}
\end{center}
  \end{table}
\noindent As was said above, using a direct binary powering approach to
exponentiation, all these algorithms use $O(n\M(n)\log(p))$ operations
in $\F_p$.

\paragraph{Cipolla's square root algorithm.} 
To compute the square root of $a\in (\F_q^*)^2$, Cipolla's algorithm
uses an element $b$ in $\F_q$ such that $b^2 - 4a$ is not a square in
$\F_q$. Then, the polynomial $f(Y) = Y^2 - bY + a$ is irreducible over
$\F_q$, hence $\K=\F_q[Y]/\langle f\rangle$ is a field. Let $y$ be the
residue class of $Y$ modulo $\langle f\rangle$.  Since $f$ is the
minimal polynomial of $y$ over $\F_q$, $N_{\K/\F_q}(y) = a$, ensuring
that $\sqrt{a}=Y^{(q + 1) / 2} \bmod (Y^2 - bY + a)$.

Finding a quadratic nonresidue of the form $b^2 - 4a$ by choosing a
random $b \in \F_q$ requires an expected $O(1)$ attempts~\cite[page
  158]{BachSh1996}. The quadratic residue test, and the norm
computation take $O(\M(n)\log(n)+\log(p))$ and $O(n\M(n)\log(p))$
multiplications in $\F_p$ respectively. Therefore, the cost of the
algorithm is an expected $O(n\M(n)\log(p))$ operations in $\F_p$.

Algorithms extending Cipolla's to the computation of $t$-th roots in $\F_p$,
where $t$ is a prime factor of $p-1$, are
in~\cite{Williams72,WiHa93,NiHaSuKu09}.

\paragraph{The Tonelli-Shanks algorithm.}
We will describe the algorithm in the case of square roots, although
the ideas extend to higher orders. Tonelli's
algorithm~\cite{Tonelli1891} and Shanks' improvement~\cite{Shanks1972}
use discrete logarithms to reduce the problem to a subgroup of
$\F_q^*$ of odd order. Let $q - 1 = 2^r\ell$ with $(\ell, 2) = 1$ and
let $H$ be the unique subgroup of $\F_q^*$ of order $\ell$. Assume we
find a quadratic nonresidue $g \in \F_q^*$; then, the square root of
$a \in \F_q^*$ can be computed as follows: we can express $a$ as $g^sh
\in g^sH$ by solving a discrete logarithm in $\F_q^* / H$; $s$ is
necessarily even, so that $\sqrt{a} = g^{s / 2}h^{(\ell + 1) / 2}$.

According to \cite{Pohlig1978}, the discrete logarithm requires
$O(r^2\M(n))$ multiplications in $\F_p$; all other steps take
$O(n\M(n)\log(p))$ operations in $\F_p$. Hence, the expected running
time of the algorithm is $O((r^2 + n\log(p))\M(n))$ operations in
$\F_p$. Thus, the efficiency of this algorithm depends on the
structure of $\F_q^*$: there exists an infinite sequence of primes for
which the cost is $O(n^2\M(n)\log(p)^2)$, see~\cite{Tornaria2002}.


\paragraph{Improving the exponentiation.}
All algorithms seen before use at best $O(n\M(n)\log(p))$ operations
in $\F_p$, because of the exponentiation. Using ideas going back
to~\cite{ItTs88}, Barreto {\it et al.}~\cite{BaKiLySc02} observed that
for some of the cases seen above, the exponentiation can be improved,
giving a cost subquadratic in $n$.

For instance, when taking square roots with $q = 3 \pmod 4$, the
exponentiation $a^{(q+1)/4}$ can be reduced to computing
$a^{1+u+\cdots+u^{(n-3)/2}}$, with $u=p^2$, plus two (cheap)
exponentiations with exponents $p(p-1)$ and $(p+1)/4$. The special
form of the exponent $1+u+\cdots+u^{(n-3)/2}$ makes it possible to
apply a binary powering approach, involving $O(\log(n))$
multiplications and exponentiations, with exponents that are powers of
$p$.

Further examples for square roots are discussed
in~\cite{KoCaYuLi06,HaChKi09}, covering the entries of
Table~\ref{table1}. These references assume a normal basis
representation; using (as we do) the monomial basis and modular
composition techniques (which will be explained in the next section),
the costs become $O(\M(n)\log(p)+\CC(n)\log(n))$. Some cases of higher
index roots are in~\cite{BaVo06}: if $t$ is a factor of $p-1$, such
that $t^2$ does not divide $p-1$, and if $\gcd(n,t)=1$, then $t$-th
root extraction can be done using $O(t\M(n)\log(p)+\CC(n)\log(n))$
operations in $\F_p$.

\paragraph{Kaltofen and Shoup's algorithm.}
Finally, we mention what is, as far as we know, the only algorithm
achieving an expected subquadratic running time in $n$ (using the
monomial basis representation), without any assumption on $p$.

Consider a factor $t$ of $q-1$. To compute a $t$-th root of
$a\in(\F_q^*)^t$, the idea is simply to factor $Y^t - a \in \F_q[Y]$
using polynomial factorization techniques. Since we know that $a$ is a
$t$-th power, this polynomial splits into linear factors, so we
can use an Equal Degree Factorization (EDF) algorithm.

A specialized EDF algorithm, dedicated to the case of high-degree
extension of a given base field, was proposed by Kaltofen and Shoup
\cite{KaltofenShoup1997}. It mainly reduces to the computation of a
trace-like quantity $b+b^p + \cdots + b^{p^{n-1}}$, where $b$ is a
random element in $\F_q[Y]/\langle Y^t-a\rangle$. Using a binary
powering technique similar to the one of the previous paragraph, this
results in an expected running time of $O((\M(t)\M(n)\log(p) + t
\CC(n) + \CC(t)\M(n))\log(n))$ operations in $\F_p$; remark that this
estimate is faster than what is stated in~\cite{KaltofenShoup1997} by
a factor $\log(t)$, since here we only need one root, instead of the
whole factorization. In the particular case $t=2$, this becomes
$O((\M(n)\log(p) + \CC(n))\log(n))$. This achieves a running time
subquadratic in~$n$.

This idea actually allows one to compute a $t$-th root, for arbitrary
$t$: starting from the polynomial $Y^t-a$, we apply the above
algorithm to $\gcd(Y^t-a, Y^q-Y)$; computing $Y^q$ modulo $Y^t-a$ can
be done by the same binary powering techniques.



\section{A new root extraction algorithm}\label{section:newRootEx}

In this section, we focus on $t$-th root extraction in $\F_q$, for $t$
a prime dividing $q-1$ (as we saw in \refsection{section:intro},
$m$-th root extraction, for an integer $m \ge 2$, reduces to taking 
$t$-th roots, where $t$ is a prime factor of $m$ dividing $q-1$).

The algorithm we present uses the trace $\F_q \to \F_{q'}$, for some
subfield $\F_{q'} \subset \F_q$ to reduce $t$-th root extraction in
$\F_q$ to $t$-th root extraction in $\F_{q'}$. We assume as before
that the field $\F_q$ is represented by a quotient $\F_p[X] / \langle
f\rangle$, with $f(X) \in \F_p[X]$ a monic irreducible polynomial of
degree $n$. We let $x$ be the residue class of $X$ modulo $\langle
f\rangle$.

Since we will handle both $\F_q$ and $\F_{q'}$, conversions may be
needed. We recall that the minimal polynomial $g \in \F_p[Z]$ of an
element $b \in \F_q$ can be computed in $O(\CC(n)+\M(n)\log(n))$
operations in $\F_p$~\cite{Shoup1994}. Then, $\F_{q'}=\F_p[Z]/\langle
g\rangle$ is a subfield of $\F_q=\F_p[X]/\langle f\rangle $; given $r \in
\F_{q'}$, written as a polynomial in $Z$, we obtain its representation
on the monomial basis of $\F_q$ by means of a modular composition, in
time $\CC(n)$. We will write this operation ${\rm Embed}(r,\F_q)$.
Note that when $b$ is in $\F_p$, all these operations are actually
free.


\subsection{An auxiliary algorithm}

We first discuss a binary powering algorithm to solve the following
problem. Starting from $\lambda \in \F_q$, we are going to compute
$$
\alpha_i(\lambda) = \lambda^{1 + p^s} + \lambda^{1 + p^s + p^{2s}} + \cdots + \lambda^{1 + p^s + p^{2s} + \cdots + p^{is}}
$$ for given integers $i, s > 0$. This question is similar to (but
distinct from) some exponentiations and trace-like computations we
discussed before; our solution will be a similar binary powering 
approach, which will perform $O(\log(i))$ multiplications and
exponentiations by powers of $p$. Let
$$
\xi_i = x^{p^{is}},
\quad 
\zeta_i(\lambda) = \lambda^{p^s + p^{2s} + \cdots + p^{is}} 
\quad\text{and}\quad 
\delta_i(\lambda) = \lambda^{p^s} + \lambda^{p^s + p^{2s}} + \cdots + \lambda^{p^s + p^{2s} + \cdots + p^{is}},
$$ where all quantities are computed in $\F_q$, that is, modulo $f$;
for simplicity, in this paragraph, we will write $\alpha_i$, $\zeta_i$
and $\delta_i$. Note that $\alpha_i=\lambda \delta_i$, and that we
have the following relations:
$$\xi_1 = x^{p^s}, \quad \zeta_1 = \lambda^{p^s}, \quad \delta_1 = \lambda^{p^s}$$
and
$$
\xi_i =
\begin{cases}
\xi_{i / 2}^{p^{is / 2}} & \text{if $i$ is even}  \\
\xi_{i - 1}^{p^s} & \text{if $i$ is odd,}
\end{cases} \quad
\zeta_i = 
\begin{cases}
\zeta_{i / 2}\zeta_{i / 2}^{p^{is / 2}} & \text{if $i$ is even}  \\
\zeta_1\zeta_{i - 1}^{p^s} & \text{if $i$ is odd,}
\end{cases} \quad
\delta_i = 
\begin{cases}
\delta_{i / 2} + \zeta_{i / 2}\delta_{i / 2}^{p^{is / 2}} & \text{if $i$ is even}  \\
\delta_{i-1} + \zeta_i & \text{if $i$ is odd.}
\end{cases}$$
Because we are working in a monomial basis, computing exponentiations
to powers of $p$ is not a trivial task; we will perform them using the
following modular composition technique from~\cite{GaSh92}.

Take $j \ge 0$ and $r\in \F_q$, and let $R$ and $\Xi_j$ be the
canonical preimages of respectively $r$ and $\xi_j$ in $\F_p[X]$;
then, we have
$$r^{p^{js}} = R(\Xi_j) \bmod f,$$ see for instance~\cite[Chapter
  14.7]{GaGe03}. We will simply write this as $r(\xi_j)$, and note
that it can be computed using one modular composition, in time
$\CC(n)$. These remarks give us the following recursive algorithm,
where we assume that $\xi_1=x^{p^s}$ and $\zeta_1=\lambda^{p^s}$ are already
known.

\begin{algorithm}
[XiZetaDelta$(\lambda,i,\xi_1,\zeta_1)$]
\label{algorithm:xizetadelta}
\begin{algorithmic}[1]
\REQUIRE $\lambda$, a positive integer $i$, $\xi_1=x^{p^s}$, $\zeta_1=\lambda^{p^s}$
\ENSURE $\xi_i$, $\zeta_i$, $\delta_i$
\IF {$i=1$} 
\RETURN $\xi_1$, $\zeta_1$, $\zeta_1$
\ENDIF
\STATE $j \leftarrow \lfloor i/2\rfloor$
\STATE $\xi_{j},\zeta_{j},\delta_j \leftarrow {\rm XiZetaDelta}(\lambda,j,\xi_1,\zeta_1)$ 
\STATE $\xi_{2j} \leftarrow \xi_j(\xi_j)$
\STATE $\zeta_{2j} \leftarrow \zeta_j\cdot \zeta_j(\xi_j)$
\STATE $\delta_{2j}\leftarrow \delta_j+\zeta_j \delta_j(\xi_j)$
\IF {$i$ is even} 
\RETURN $\xi_{2j}$, $\zeta_{2j}$, $\delta_{2j}$
\ENDIF
\STATE $\xi_i \leftarrow \xi_{2j}(\xi_1)$
\STATE $\zeta_i \leftarrow \zeta_1\cdot \zeta_{2j}(\xi_1)$
\STATE $\delta_i \leftarrow \delta_{2j}+\zeta_i$
\RETURN $\xi_i$, $\zeta_i$, $\delta_i$
\end{algorithmic}
\end{algorithm}
We deduce the following algorithm for computing $\alpha_i(\lambda)$.

\begin{algorithm}
[Alpha$(\lambda,i)$]
\label{algorithm:alpha}
\begin{algorithmic}[1]
\REQUIRE $\lambda$, a positive integer $i$
\ENSURE $\alpha_i$
\STATE $\xi_1 \leftarrow x^{p^s}$
\STATE $\zeta_1 \leftarrow \lambda^{p^s}$
\STATE $\xi_i$, $\zeta_i$, $\delta_i \leftarrow {\rm XiZetaDelta}(\lambda,i,\xi_1,\zeta_1)$ 
\RETURN $\lambda \delta_i$
\end{algorithmic}
\end{algorithm}

\begin{proposition}
  \refalgorithm{algorithm:alpha} computes $\alpha_i(\lambda)$ using
  $O(\CC(n)\log(is) + \M(n)\log(p))$ operations in $\F_p$.
\end{proposition}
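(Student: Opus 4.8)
The plan is to establish two things: that \refalgorithm{algorithm:alpha} indeed returns $\alpha_i(\lambda)$, and that it does so within the stated cost. For correctness I would first check that the three recurrences displayed before \refalgorithm{algorithm:xizetadelta} really characterize $\xi_i$, $\zeta_i$, $\delta_i$. The one fact needed is that raising to the power $p^{js}$ is the substitution $r \mapsto r(\xi_j)$, as recalled before the algorithm. Granting this, the even-index relations follow by splitting the exponent set $\{p^s, p^{2s}, \dots, p^{2js}\}$ at its midpoint: applying $(\cdot)^{p^{js}}$ shifts $\{p^s,\dots,p^{js}\}$ onto $\{p^{(j+1)s},\dots,p^{2js}\}$, so $\zeta_j \cdot \zeta_j^{p^{js}} = \zeta_{2j}$, and the summands of $\zeta_j\,\delta_j^{p^{js}}$ are exactly the "second half" $\sum_{m=j+1}^{2j}\lambda^{p^s+\cdots+p^{ms}}$ of $\delta_{2j}$, whence $\delta_{2j} = \delta_j + \zeta_j\,\delta_j^{p^{js}}$; the odd-index relations merely append the single missing term $\lambda^{p^s+\cdots+p^{is}}$. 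A straightforward induction on $i$ then shows that ${\rm XiZetaDelta}(\lambda,i,\xi_1,\zeta_1)$ returns $(\xi_i,\zeta_i,\delta_i)$ — using, in the odd branch, that $2j = i-1$ for $j = \lfloor i/2\rfloor$ — and since $\alpha_i(\lambda) = \lambda\,\delta_i(\lambda)$ by definition, Step~4 of \refalgorithm{algorithm:alpha} returns the right value.

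For the cost I would separate the set-up (Steps~1--2) from the recursive call (Step~3). In Step~1, $x^p \bmod f$ is computed by binary powering in $O(\M(n)\log p)$ operations in $\F_p$; then $\xi_1 = x^{p^s}$ is obtained from $x^p$ by a second binary powering, this time on the Frobenius, writing $x^{p^{a+b}}$ as the modular composition of the polynomial for $x^{p^a}$ with the one for $x^{p^b}$ — this needs $O(\log s)$ modular compositions, i.e. $O(\CC(n)\log s)$ operations. Step~2 is analogous: $\lambda^p$ by binary powering in $O(\M(n)\log p)$, then $\lambda^{p^s}$ by $O(\log s)$ Frobenius compositions. (Computing $\lambda^{p^s}$ this way, rather than as the single composition $\lambda(\xi_1)$, is what makes the bound degenerate correctly when $s=1$, where no composition is performed.) So the set-up costs $O(\M(n)\log p + \CC(n)\log s)$.

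Finally, each invocation of ${\rm XiZetaDelta}$, besides one recursive call on $\lfloor i/2\rfloor$, performs a bounded number of modular compositions ($\xi_j(\xi_j)$, $\zeta_j(\xi_j)$, $\delta_j(\xi_j)$, and in the odd case $\xi_{2j}(\xi_1)$, $\zeta_{2j}(\xi_1)$), a bounded number of multiplications in $\F_q$, and a bounded number of additions; since $\M(n) = O(\CC(n))$ and $n = O(\CC(n))$, each level costs $O(\CC(n))$, and there are $O(\log i)$ levels, so Step~3 costs $O(\CC(n)\log i)$ and the closing product $\lambda\,\delta_i$ costs $O(\M(n))$. Adding everything, the total is $O(\M(n)\log p + \CC(n)\log s + \CC(n)\log i) = O(\CC(n)\log(is) + \M(n)\log p)$, as claimed. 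I expect the only delicate point to be the bookkeeping in the correctness induction — matching the index shifts in the $\zeta$ and $\delta$ recurrences to the even/odd branches of the pseudocode — since the complexity analysis reduces, once the set-up is isolated, to the routine "$O(\log i)$ levels, $O(\CC(n))$ per level" estimate.
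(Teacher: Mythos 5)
Your proposal is correct and follows essentially the same route as the paper: the paper's proof is exactly your cost analysis (compute $x^p$ and $\lambda^p$ by $O(\log p)$ multiplications, then $O(\log s)$ modular compositions to reach the $p^s$-th powers, then $O(\log i)$ recursion levels at $O(\CC(n))$ each). Your verification of the recurrences for $\xi_i,\zeta_i,\delta_i$ is sound extra detail that the paper states without proof.
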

\begin{proof}
  To compute $x^{p^s}$ and $\lambda^{p^s}$ we first compute $x^p$ and
  $\lambda^p$ using $O(\log(p))$ multiplications in $\F_q$, and then
  do $O(\log (s))$ modular compositions modulo $f$. The depth of the
  recursion in Algorithm~\ref{algorithm:xizetadelta} is $O(\log(i))$;
  each recursive call involves $O(1)$ additions, multiplications and
  modular compositions modulo $f$, for a total time of $O(\CC(n))$ per
  recursive call.
\end{proof}

As said before, the algorithm can also be implemented using a normal
basis representation. Then, exponentiations to powers of $p$ become
trivial, but multiplication becomes more difficult. We leave these
considerations to the reader.


\subsection{Taking $t$-th roots} 

We will now give our root extraction algorithm. As said before, we now
let $t$ be a prime factor of $q-1$, and we let $s$ be the order of $p$
in $\mathbb{Z}/t\mathbb{Z}$. Then $s$ divides $n$, say $n = s\ell$.

We first explain how to test for $t$-th powers. Testing whether $a
\in \F_q^*$ is a $t$-th power is equivalent to testing whether
$a^{(q-1)/t}=1$. Let $\zeta = a^{(p^s - 1) / t}$; then $a^{(q - 1) /
  t} = \zeta^{1 + p^s + \cdots + p^{(\ell - 1)s}}$. Computing $\zeta$
requires $O(s\M(n)\log(p))$, and computing $\zeta^{1 + p^s + \cdots +
  p^{(\ell - 1)s}}$ using \refalgorithm{algorithm:xizetadelta}
requires $O(\CC(n)\log(n) + \M(n)\log(p))$ operations in
$\F_p$. Therefore, testing for a $t$-th power takes $O(\CC(n)\log(n) +
s\M(n)\log(p))$ operations in $\F_p$.

In the particular case when $t$ divides $p - 1$, we can actually do
better: we have $a^{(q - 1) / t} = \operatorname{res}(f, a)^{(p - 1) /
  t}$, where $\operatorname{res}(\cdot, \cdot)$ is the resultant
function. The resultant can be computed using $O(\M(n)\log(n))$
operations in $\F_p$, so the whole test can be done using
$O(\M(n)\log(n)+\log(p))$ operations in $\F_p$.

In any case, we can now assume that we are given $a \in (\F_q^*)^t$,
with $t$-th root $\gamma \in \F_q$. 
Defining $\beta = T_{\F_q / \F_{q'}}(\gamma)$, where
$T_{\F_q / \F_{q'}}:\F_q \to \F_{q'}$ is the trace linear form and
$q' = p^s$, we have
\begin{align}
\label{equation:tr-square}
\beta 
& = \sum_{i = 0}^{\ell - 1} \gamma^{p^{is}} \nonumber \\
& = \gamma(1 + \gamma^{p^s - 1} + \gamma^{p^{2s} - 1} + \cdots + \gamma^{p^{(\ell - 1)s} - 1}) \nonumber \\
& = \gamma(1 + a^{(p^s - 1) / t} + a^{(p^{2s} - 1) / t} + \cdots + a^{(p^{(\ell - 1)s} -1) / t}).
\end{align}
Let $b = 1 + a^{(p^s - 1) / t} + a^{(p^{2s} - 1) / t} + \cdots +
a^{(p^ {(\ell - 1)s} -1) / t}$, so that
\refequation{equation:tr-square} gives $\beta = \gamma b$.  Taking the
$t$-th power in both sides results in the equation $\beta^t = ab^t$
over $\F_{q'}$. Since we know $a$, and we can compute $b$, we can thus
determine $\beta$ by $t$-th root extraction in $\F_{q'}$. Then, if we
assume that $b \ne 0$ (or equivalently that $\beta \ne 0$), we deduce
$\gamma = \beta b^{-1}$; to resolve the issue that $\beta$ may be
zero, we will replace $a$ by $a'=ac^t$, for a random element $c \in
\F_q^*$.

Computing the $t$-th root of $a'b^t$ in $\F_{q'}$ is done as follows.
We first compute the minimal polynomial $g\in\F_p[Z]$ of $a'b^t$, and
let $z$ be the residue class of $Z$ in $\F_p[Z]/\langle g \rangle$.
Then, we compute a $t$-th root $r$ of $z$ in $\F_p[Z]/\langle g \rangle$, and
embed $r$ in $\F_q$. The computation of $r$ is done by a black-box
$t$-th root extraction algorithm, denoted by $r \mapsto r^{1/t}$.

It remains to explain how to compute $b$ efficiently. Let $\lambda =
a^{(p^s - 1) / t}$; then, one verifies that $b = 1 + \lambda +
\alpha_{\ell - 2}(\lambda)$, so we can use the algorithm of the
previous subsection. Putting all together, we obtain the following
algorithm

\begin{algorithm}
[$t$-th root in $\F_q^*$]
\label{algorithm:new}
\begin{algorithmic}[1]
\REQUIRE $a \in (\F_q^*)^t$
\ENSURE a $t$-th root of $a$
\STATE $s \leftarrow $ the order of $p$ in $\mathbb{Z}/t\mathbb{Z}$
\STATE $\ell \leftarrow n / s$
\REPEAT
\STATE choose a random $c \in \F_q$
\STATE $a'\leftarrow ac^t$
\STATE $\lambda \leftarrow {a'}^{(p^s-1)/t}$
\STATE $b \leftarrow 1+\lambda+{\rm Alpha}(\lambda,\ell-2)$
\UNTIL {$b \ne 0$}
\STATE $g \leftarrow {\rm MinimalPolynomial}(a'b^t)$
\STATE $\beta \leftarrow z^{1/t}$ in $\F_q[Z]/\langle g\rangle$
\RETURN ${\rm Embed}(\beta,\F_q) b^{-1}c^{-1}$
\end{algorithmic}
\end{algorithm}

The following proposition proves Theorem~\ref{theo:main}.

\begin{proposition}
  \refalgorithm{algorithm:new} computes a $t$-th root of $a$ using an
  expected $O(s\M(n)\log(p) + \CC(n)\log(n))$ operations in $\F_p$,
  plus a $t$-th root extraction in $\F_{q'}$.
\end{proposition}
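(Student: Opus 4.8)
The plan is to account for the cost of each step of \refalgorithm{algorithm:new}, verify the correctness of the output, and bound the expected number of iterations of the \texttt{REPEAT} loop. For correctness, I would first note that if $a \in (\F_q^*)^t$ then so is $a' = ac^t$ for any $c \in \F_q^*$, and its $t$-th roots are of the form $\gamma c$ with $\gamma^t = a$. The key identity is \refequation{equation:tr-square}: if $\gamma'$ is a $t$-th root of $a'$, then $\beta := T_{\F_q/\F_{q'}}(\gamma') = \gamma' b$ with $b$ as defined (and $b \in \F_{q'}$ since it is a sum of $\F_{q'}$-rational quantities — indeed each $a'^{(p^{is}-1)/t}$ is fixed by the Frobenius $x \mapsto x^{p^s}$ because $t \mid p^s - 1$ divides $p^{is}-1$). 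Hence $\beta^t = a' b^t$ lies in $\F_{q'}$, so extracting its $t$-th root there and dividing by $b$ (then by $c$) recovers a $t$-th root of $a$. I should also check that the expression $b = 1 + \lambda + \alpha_{\ell-2}(\lambda)$ with $\lambda = a'^{(p^s-1)/t}$ matches the definition of $b$ from the text: $\alpha_{\ell-2}(\lambda) = \sum_{k=2}^{\ell-1} \lambda^{1 + p^s + \cdots + p^{ks}}$... wait, I need to recheck the indexing — $\alpha_i(\lambda)$ sums terms $\lambda^{1+p^s+\cdots+p^{js}}$ for $j = 1, \dots, i$, and $\lambda^{1 + p^s + \cdots + p^{js}} = \lambda^{(p^{(j+1)s}-1)/(p^s-1)} = a'^{(p^{(j+1)s}-1)/t}$, so $1 + \lambda + \alpha_{\ell-2}(\lambda)$ indeed gives $\sum_{j=0}^{\ell-1} a'^{(p^{js}-1)/t} = b$.

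Next, for the cost: computing $s$ and $\ell$ is negligible. Raising $c$ to the $t$-th power and multiplying takes $O(\log(t)\M(n)) = O(\M(n)\log(p))$ operations in $\F_p$ (using $t \le p^s$, so $\log t = O(s\log p)$; more carefully $t \mid p^s - 1$ gives $\log t \le s \log p$). Computing $\lambda = a'^{(p^s-1)/t}$ is an exponentiation with an exponent bounded by $p^s$, hence $O(s\M(n)\log(p))$ operations in $\F_p$ by binary powering. The call to $\mathrm{Alpha}(\lambda, \ell-2)$ costs $O(\CC(n)\log((\ell-2)s) + \M(n)\log(p)) = O(\CC(n)\log(n) + \M(n)\log(p))$ by the preceding proposition (using $\ell s = n$). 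Computing the minimal polynomial $g$ of $a'b^t$ costs $O(\CC(n) + \M(n)\log(n))$ by the cited result of Shoup, and since $\deg g = s \le n$ and $\M,\CC$ are superlinear this is within budget. The $\mathrm{Embed}$ operation is one modular composition, $O(\CC(n))$, and the final inversions/multiplications by $b^{-1}$ and $c^{-1}$ cost $O(\M(n)\log(n))$. Summing, one pass of the loop plus the post-loop work costs $O(s\M(n)\log(p) + \CC(n)\log(n))$ operations in $\F_p$, plus the one $t$-th root extraction in $\F_{q'}$.

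The main obstacle is bounding the expected number of loop iterations, i.e.\ showing that a random $c \in \F_q^*$ yields $b \ne 0$ with probability bounded away from zero. Here $b = \beta/\gamma'$ where $\beta = T_{\F_q/\F_{q'}}(\gamma')$ for a fixed $t$-th root $\gamma'$ of $a' = ac^t$; writing $\gamma' = \gamma c$ for a fixed $t$-th root $\gamma$ of $a$, we have $b = T_{\F_q/\F_{q'}}(\gamma c)/(\gamma c)$, and $b = 0$ exactly when $\gamma c \in \ker T_{\F_q/\F_{q'}}$. Wait — but $c$ ranges over $\F_q^*$ while we actually want $\gamma' = \gamma c$ to range suitably; since $\gamma \ne 0$, the map $c \mapsto \gamma c$ is a bijection of $\F_q^*$, so $b = 0$ iff $\gamma c$ lies in the hyperplane $\ker T_{\F_q/\F_{q'}}$, which is an $\F_{q'}$-subspace of $\F_q$ of codimension $1$, hence has $p^{n-s}$ elements, at most $p^{n-s}$ of which are nonzero. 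Thus $\Pr[b = 0] \le p^{n-s}/(p^n - 1) < 1/p^s + o(1) \le 1/2$ (for $p^s \ge 2$, with a little care when $p = 2, s = 1$; in that degenerate case $t \mid p^s - 1 = 1$ is impossible, so $s \ge 1$ with $p^s \ge 3$ whenever $p$ is odd, and $p^s \ge 2$... actually $t \ge 2$ and $t \mid p^s-1$ forces $p^s \ge 3$). Hence each iteration succeeds with probability $\ge 1/2$, so the expected number of iterations is $O(1)$, and the expected total cost is as claimed. I would close by noting the output correctness: once $b \ne 0$, $\beta = \mathrm{Embed}(z^{1/t}, \F_q)$ satisfies $\beta^t = a'b^t = ac^t b^t$, so $(\beta b^{-1} c^{-1})^t = a$, as required.
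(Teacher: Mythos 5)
Your proof follows essentially the same route as the paper's (trace identity, cost accounting step by step, and the observation that $b=0$ iff $\gamma c$ lies in the kernel of the trace, a set of size $q/q'$, giving an $O(1)$ expected number of iterations); in fact you are more detailed than the paper on both the correctness of the indexing of $\alpha_{\ell-2}$ and the failure probability. One claim in your correctness argument is wrong, however: $b$ does \emph{not} lie in $\F_{q'}$ in general, and the individual terms $a'^{(p^{is}-1)/t}$ are \emph{not} fixed by $x\mapsto x^{p^s}$ — that would require the order of $a'$ to divide $(p^{is}-1)(p^s-1)/t$, which fails for generic $a'$ (e.g.\ for $t=2$, $s=1$, $i=1$ the term is $a'^{(p-1)/2}$, which already fails to be in $\F_p$ for a generator $a'$ of $\F_{p^2}^*$; and if $b$ were in $\F_{q'}$ then $a'=\beta^t b^{-t}$ would be too, which is absurd). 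Fortunately the fact you actually need is only that $a'b^t=\beta^t$ lies in $\F_{q'}$, and this follows directly from $\beta=T_{\F_q/\F_{q'}}(\gamma')$ being a trace, hence an element of $\F_{q'}$ — no statement about $b$ itself is required. With that parenthetical replaced by the trace argument (which you already invoke elsewhere), the proof is correct and matches the paper's.
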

\begin{proof}
  Note first that $\beta=0$ means that $T_{\F_q / \F_{q'}}(\gamma
  c)=0$. There are $q/q'$ values of $c$ for which this is the case, so
  we expect to have to choose $O(1)$ elements in $\F_q$ before exiting
  the repeat \dots until loop. Each pass in the loop uses
  $O(s\M(n)\log(p))$ operations in $\F_p$ to compute $a'$ and
  $\lambda$, and $O(\CC(n)\log(n) + \M(n)\log(p))$ operations in $\F_p$
  to compute $b$.

  Given $a'$ and $b$, one obtains $b^t$ and $a'b^t$ using another
  $O(s\M(n)\log(p))$ operations in $\F_p$; then, computing $g$ takes
  time $O(\CC(n) + \M(n)\log(n))$. 

  After the black-box call to $t$-th root extraction modulo $g$,
  embedding $\beta$ in $\F_q$ takes time $\CC(n)$. We can then deduce
  $\gamma$ by two divisions in $\F_q$, using $O(\M(n)\log(n))$
  operations in $\F_p$; this is negligible compared to the cost of all
  modular compositions.
\end{proof}



\subsection{Experimental results}
We have implemented our root extraction algorithm, in the case $m=2$
(that is, we are taking square roots); our implementation is based on
Shoup's NTL~\cite{NTL2009}. Figure~\ref{figure:sqrtTiming} compares
our algorithm to Cipolla's and Tonelli-Shanks' algorithms over $\F_q$,
with $q = p^n$, for the randomly selected prime $p =
\seqsplit{348975609381470925634534573457497}$, and different values of
the extension degree~$n$. 

\begin{figure}[ht]
\begin{center}
\includegraphics[width = 9cm]{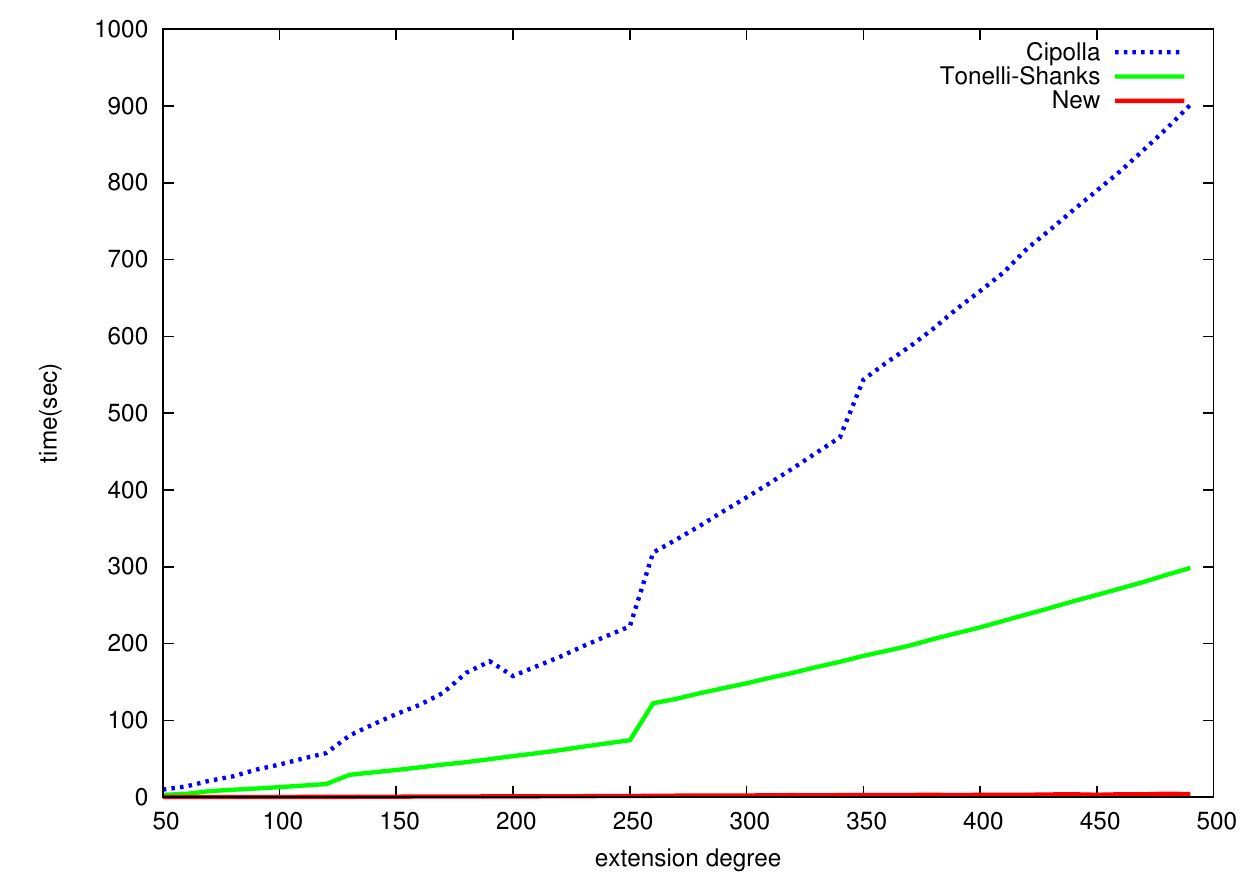}
\end{center}
\caption{\small Our square root algorithm vs. Cipolla's and
  Tonelli-Shanks' algorithms.}
\label{figure:sqrtTiming}
\end{figure}

Remember that the bottleneck in Cipolla's and Tonelli-Shanks'
algorithms is the exponentiation, which takes $O(n\M(n)\log(p))$
operation in $\F_p$. As it turns out, NTL's implementation of modular
composition has $\omega=2$; this means that with this implementation
we have $\CC(n)=O(n^2)$, and our algorithm takes expected time
$O(\M(n)\log(p)+n^2 \log(n))$. Although this implementation is not
subquadratic in $n$, it remains faster than Cipolla's and
Tonelli-Shanks' algorithms, in theory and in practice.

Next, Figure~\ref{figure:sqrtTimingKvN} compares our NTL
implementation of the EDF algorithm proposed by Kaltofen and Shoup,
and our square root algorithm (note that the range of reachable degrees
is much larger that in the first figure). We have ran the algorithms
for several random elements for each extension degree. The vertical dashed
lines, and the green line show the running time range, and the average 
running time of Kaltofen and Shoup algorithm respectively. In the case of 
our algorithm (the red graph), the vertical ranges are invisible because the 
deviation from the average is $\approx 10^{-2}$ seconds. 

\begin{figure}[ht]
\begin{center}
\includegraphics[width = 9cm]{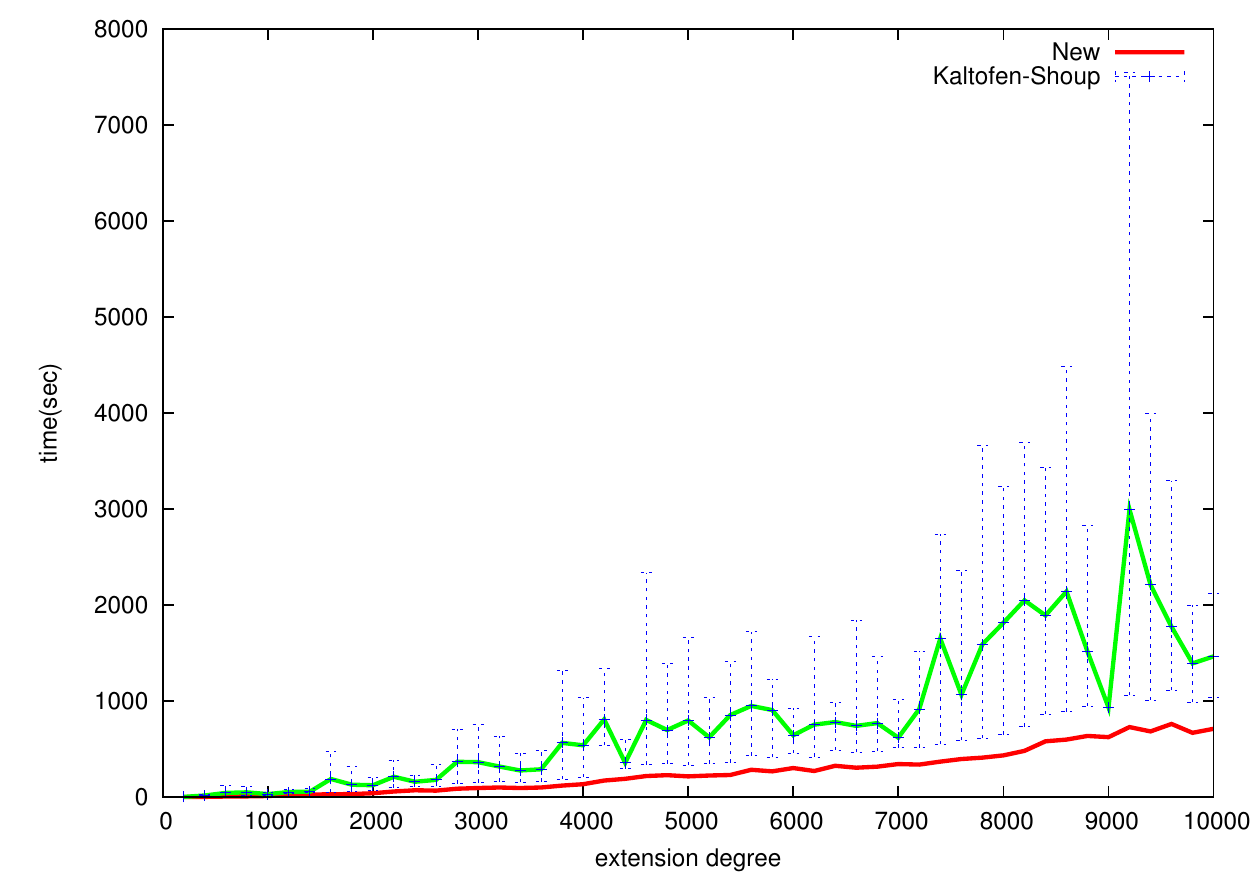}
\end{center}
\caption{\small Our algorithm vs.\ Kaltofen and Shoup's algorithm.}
\label{figure:sqrtTimingKvN}
\end{figure}

This time, the results are closer. Nevertheless, it appears that the
running time of our algorithm behaves more ``smoothly'', in the sense that
random choices seem to have less influence. This is indeed the
case. The random choice in Kaltofen and Shoup's algorithm succeeds
with probability about $1/2$; in case of failure, we have to run to
whole algorithm again. In our case, our choice of an element $c$ in
$\F_q^*$ fails with probability $1/p \ll 1/2$; then, there is still
randomness involved in the $t$-th root extraction in $\F_p$, but this
step was negligible in the range of parameters where our experiments
were conducted. 

Another way to express this is to compare the standard deviations in
the running times of both algorithms. In the case of Kaltofen-Shoup's
algorithm, the standard deviation is about $1/\sqrt{2}$ of the average
running time of the whole algorithm. For our algorithm, the standard
deviation is no more than $1/\sqrt{p}$ of the average running time of the
trace-like computation (which is the dominant part), plus $1/\sqrt{2}$
of the average running time of the root extraction in $\F_p$ (which is cheap).


\bibliographystyle{plain}
\bibliography{references}

\begin{thebibliography}{10}

\bibitem{BachSh1996}
E.~Bach and J.~Shallit.
\newblock {\em Algorithmic Number Theory; Volume {I}: Efficient Algorithms}.
\newblock The MIT Press, 1996.

\bibitem{BaVo06}
P.~Barreto and J.~Voloch.
\newblock Efficient computation of roots in finite fields.
\newblock {\em Designs, Codes and Cryptography}, 39:275--280, 2006.

\bibitem{BaKiLySc02}
P.~S. L.~M. Barreto, H.~Y. Kim, B.~Lynn, and M.~Scott.
\newblock Efficient algorithms for pairing-based cryptosystems.
\newblock In {\em Advances in cryptology---{CRYPTO} 2002}, volume 2442 of {\em
  Lecture Notes in Comput. Sci.}, pages 354--368. Springer, 2002.

\bibitem{BrKu78}
R.~P. Brent and H.~T. Kung.
\newblock Fast algorithms for manipulating formal power series.
\newblock {\em Journal of the Association for Computing Machinery},
  25(4):581--595, 1978.

\bibitem{CaKa91}
D.~G. Cantor and E.~Kaltofen.
\newblock On fast multiplication of polynomials over arbitrary algebras.
\newblock {\em Acta Informatica}, 28(7):693--701, 1991.

\bibitem{CoWi90}
D.~Coppersmith and S.~Winograd.
\newblock Matrix multiplication via arithmetic progressions.
\newblock {\em J. Symb. Comp}, 9(3):251--280, 1990.

\bibitem{GaSh92}
{J. von zur} Gathen and V.~Shoup.
\newblock Computing {F}robenius maps and factoring polynomials.
\newblock {\em Comput. Complexity}, 2(3):187--224, 1992.

\bibitem{GaSc10}
P.~Gaudry and \'E. Schost.
\newblock Point-counting in genus 2 over prime fields, 2011.

\bibitem{HaChKi09}
D.-G. Han, D.~Choi, and H.~Kim.
\newblock Improved computation of square roots in specific finite fields.
\newblock {\em IEEE Trans. Comput.}, 58:188--196, 2009.

\bibitem{HuPa98}
X.~Huang and V.~Y. Pan.
\newblock Fast rectangular matrix multiplication and applications.
\newblock {\em J. Complexity}, 14(2):257--299, 1998.

\bibitem{ItTs88}
T.~Itoh and S.~Tsujii.
\newblock A fast algorithm for computing multiplicative inverses in {${\rm
  GF}(2^m)$} using normal bases.
\newblock {\em Inform. and Comput.}, 78(3):171--177, 1988.

\bibitem{KaltofenShoup1997}
E.~Kaltofen and V.~Shoup.
\newblock Fast polynomial factorization over high algebraic extensions of
  finite fields.
\newblock In {\em ISSAC'97}, pages 184--188. ACM, 1997.

\bibitem{KoCaYuLi06}
F.~Kong, Z.~Cai, J.~Yu, and D.~Li.
\newblock Improved generalized {A}tkin algorithm for computing square roots in
  finite fields.
\newblock {\em Inform. Process. Lett.}, 98(1):1--5, 2006.

\bibitem{Muller04}
S.~M{\"u}ller.
\newblock On the computation of square roots in finite fields.
\newblock {\em Des. Codes Cryptogr.}, 31(3):301--312, 2004.

\bibitem{NiHaSuKu09}
N.~Nishihara, R.~Harasawa, Y.~Sueyoshi, and A.~Kudo.
\newblock A remark on the computation of cube roots in finite fields.
\newblock Cryptology ePrint Archive, Report 2009/457, 2009.
\newblock \url{http://eprint.iacr.org/}.

\bibitem{Pohlig1978}
S.~C. Pohlig and M.~E. Hellman.
\newblock An improved algorithm for computing logarithms over ${\rm gf}(p)$ and
  its cryptographic significance.
\newblock {\em IEEE Transactions on Information Theory}, IT-24:106--110, 1978.

\bibitem{Schonhage1971}
A.~Sch\"{o}nhage and V.~Strassen.
\newblock Schnelle {M}ultiplikation grosser {Z}ahlen.
\newblock {\em Computing}, 7:281--292, 1971.

\bibitem{Shanks1972}
D.~Shanks.
\newblock Five number-theoretic algorithms.
\newblock In {\em Proceedings of the Second Manitoba Conference on Numerical
  Mathematics}, pages 51--70, 1972.

\bibitem{Shoup1994}
V.~Shoup.
\newblock {Fast Construction of Irreducible Polynomials over Finite Fields"}.
\newblock {\em Journal of Symbolic Computation}, 17(5):371--391, May 1994.

\bibitem{NTL2009}
V.~Shoup.
\newblock A library for doing number theory (\uppercase{NTL}).
\newblock \url{http://www.shoup.net/ntl/}, 2009.

\bibitem{Tonelli1891}
A.~Tonelli.
\newblock Bemerkung \"uber die {A}ufl\"osung quadratischer {C}ongruenzen.
\newblock {\em G\"ottinger Nachrichten}, pages 344--346, 1891.

\bibitem{Tornaria2002}
G.~Tornar\'ia.
\newblock Square roots modulo $p$.
\newblock In {\em LATIN 2002: Theoretical Informatics}, volume 2286 of {\em
  Lecture Notes in Comput. Sci.}, pages 430--434. Springer, 2002.

\bibitem{GaGe03}
J.~von~zur Gathen and J.~Gerhard.
\newblock {\em Modern {C}omputer {A}lgebra}.
\newblock Cambridge University Press, 2003.

\bibitem{Williams72}
H.~C. Williams.
\newblock Some algorithms for solving {$x^{q}\equiv N$} {$({\rm mod}\ p)$}.
\newblock In {\em Proceedings of the {T}hird {S}outheastern {C}onference on
  {C}ombinatorics, {G}raph {T}heory and {C}omputing ({F}lorida {A}tlantic
  {U}niv., {B}oca {R}aton, {F}la., 1972)}, pages 451--462. Florida Atlantic
  Univ., 1972.

\bibitem{WiHa93}
K.~S. Williams and K.~Hardy.
\newblock A refinement of {H}. {C}. {W}illiams' {$q$}th root algorithm.
\newblock {\em Math. Comp.}, 61(203):475--483, 1993.

\end{thebibliography}

\end{document}